\newcommand{\tikzgrid}[2]{
  \newcommand{\xmin}{0}
  \newcommand{\xmax}{#1}
  \newcommand{\ymin}{0}
  \newcommand{\ymax}{#2}
    \foreach \i in {\xmin,\the\numexpr\xmin + 2\relax, ..., \xmax} {
      \draw[gray, very thin] (\i,\the\numexpr\ymin-1\relax) -- (\i,\the\numexpr\ymax+1\relax);
    }
    \foreach \j in {\ymin,\the\numexpr\ymin + 2\relax, ..., \ymax} {
      \draw[gray, very thin] (\the\numexpr\xmin-1\relax,\j) --
      (\the\numexpr\xmax+1\relax,\j);
    }
}
\newcommand{\wall}[5]{
    \foreach \i in {1,...,#2}
      \foreach \j in {1,...,#1} {
        \node[draw,circle] (a\i\j) at (\i, \j) {};
      }
    \foreach \i in {1,...,#4}
      \foreach \j in {1,...,#1} {
        \draw (a\i\j) -- (a\the\numexpr \i + 1\relax\j);
      }
    \foreach \i in {1,3,...,#2}
      \foreach \j in {1,3,...,#3} {
        \draw (a\i\j) -- (a\i\the\numexpr \j + 1\relax);
      }
    \foreach \i in {2,4,...,#2}
      \foreach \j in {2,4,#5} {
        \draw (a\i\j) -- (a\i\the\numexpr \j + 1\relax);
      }
    }
\tikzset{cross/.style={cross out, draw=black, minimum size=2*(#1-\pgflinewidth),
inner sep=0pt, outer sep=0pt, rotate=45, thick},
cross/.default={1pt}}
\title{Degree-3 planar graphs as topological minors of wall graphs in polynomial time}
\author{Antoine Amarilli}{LTCI, Télécom Paris, Institut Polytechnique de Paris,
France \and
\url{https://a3nm.net/}}{a3nm@a3nm.net}{https://orcid.org/0000-0002-7977-4441}{}
\authorrunning{Antoine Amarilli}
\keywords{Planar graph, topological minor, wall graph}
\begin{document}

\maketitle

\begin{abstract}
  In this note, we give a proof of the fact that we can efficiently find degree-3 planar graphs as
  topological minors of sufficiently large wall graphs. The result is needed as
  an intermediate step to fix a proof in my PhD
  thesis~\cite{amarilli2016leveraging}.
\end{abstract}

\section{Introduction and related work}
\label{sec:intro}

In this note, we explain how, given a degree-3 planar graph, we can 
find it as a topological minor of a wall graph, and do so efficiently, i.e., in
polynomial time. This result is needed as
  an intermediate step to fix a proof in my PhD
  thesis~\cite{amarilli2016leveraging}. Here is the formal statement of the
  result:

  \begin{theorem}
    \label{thm:main}
  Given as input any
    degree-3 planar graph~$G$ with~$n$ vertices, we can compute in time
    $O(n^{10})$ a wall
    graph~$H$ of size $O(n^4)$ by $O(n^4)$ and
    an embedding of~$G$ as a topological minor of~$H$.
  \end{theorem}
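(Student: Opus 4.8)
The plan is to pass from a planar drawing of $G$, to a drawing of $G$ inside an integer square grid, and finally to an embedding into a wall. Observe first that the degree-$3$ hypothesis is necessary: every vertex of a wall graph has degree at most $3$, and the maximum degree of a topological minor cannot exceed that of its host (each branch vertex $v$ leaves its image through $\deg_G(v)$ distinct host-neighbours). Hence only graphs of maximum degree $3$ can occur as topological minors of walls. I would begin by computing, in linear time, a combinatorial planar embedding of $G$, which is standard.

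Next I would realize $G$ on an integer grid of polynomial size. Using the planar embedding, I would compute a planar rectilinear (orthogonal) grid drawing placing each vertex of $G$ at a distinct grid point and each edge along an internally disjoint path of grid edges, the edge-paths being pairwise non-crossing since the underlying drawing is planar. The property I would extract is that \emph{at every grid point the embedding uses at most $3$ of the (up to) four incident grid directions}: a grid point carrying a vertex $v$ uses exactly $\deg_G(v)\le 3$ directions, a bend or straight pass-through uses $2$, and no grid point is shared by two distinct edge-paths, as planarity forbids crossings. Thus $G$ is realized as a topological minor of the grid while only ever using a degree-$\le 3$ portion of it.

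The final and main step is to convert this grid realization into an embedding into a wall, and here the essential difficulty is the degree mismatch: grid points have degree $4$, whereas every wall vertex has degree at most $3$. The idea is to scale the drawing up by a polynomial factor so that each grid point and each unit grid edge receives a private rectangular region of the wall, and then, inside these regions, to replace each grid edge by a straight \emph{wire} routed through the bricks and each used grid point by a small fixed wall gadget reproducing its local connection pattern among the incident wires. Since the realization only ever requests degree-$\le 3$ patterns at a grid point, such constant-size gadgets exist within the brick structure; I would enumerate the finitely many cases (a through-wire, a bend, and a three-way split) and exhibit one wall gadget for each, taking care that a degree-$3$ vertex of $G$ maps to a genuine degree-$3$ wall vertex. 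Composing the gadgets and wires yields the wall $H$ together with the branch vertices and internally disjoint paths witnessing $G$ as a topological minor of $H$, disjointness being inherited from the disjoint regions and the disjoint grid usage. I expect the gadget design, and the verification that the pieces compose into a single valid topological-minor embedding, to be the crux; the remaining accounting, namely that the grid drawing has polynomial size and that the scaling and replacement inflate it to a wall of size $O(n^4)$ by $O(n^4)$ computable in time $O(n^{10})$, is then routine bookkeeping.
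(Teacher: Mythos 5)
Your proposal is correct in outline but follows a genuinely different route from the one in this note --- in fact it is essentially the alternative proof acknowledged in the related-work paragraph of Section~\ref{sec:intro}, which goes through an orthogonal grid drawing in the style of~\cite{tamassia1989planar} as suggested in~\cite{cstheory}. The paper instead starts from Schnyder's \emph{straight-line} drawing on an $(n-2)\times(n-2)$ grid, scales it up by $\sigma = 40n^3$, proves quantitative separation bounds (Lemma~\ref{lem:pass} on how close a segment can pass to an off-segment integer point, Lemma~\ref{lem:boxdist} on how close two segments sharing an endpoint can be outside a box around that endpoint), approximates each segment by a wall path staying within distance~$2$ (Lemma~\ref{lem:line}), and then repairs the residual overlaps inside disjoint boxes around the vertices (Lemma~\ref{lem:paths}). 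Your approach buys a lot: disjointness of the edge routes is combinatorial (distinct edges occupy disjoint sets of grid edges in the orthogonal drawing), so all of the metric lemmas and the $40n^3$ scaling disappear, and you would get an $O(n)\times O(n)$ wall in near-linear time rather than $O(n^4)\times O(n^4)$ in $O(n^{10})$ --- both still within the theorem's stated bounds. What the paper's route buys is that it needs only a straight-line drawing as a black box and never has to reason about local connection patterns at grid points. The one substantive piece you defer is exactly where your proof's content lives: the catalogue of wall gadgets realizing the degree-$\le 3$ local patterns (through-wire, bend, three-way split), together with the handling of the wall's missing vertical edges --- a wire that must turn upward at a column where the wall has no vertical edge at that row must first jog sideways, which is what your scaling and private regions are for. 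Those finitely many cases do exist and are routine, but they should be drawn out explicitly before the argument is complete.
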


  Note that, in this result, we make no effort to optimize the degree of the
  polynomial.

  The proof presented here uses the well-known
  fact~\cite{schnyder1990embedding,chrobak1995linear}
  that planar graphs can be drawn in linear time on a grid with integer
  coordinates. Specifically, it works by starting from the drawing and adjusting
  it to obtain the embedding.

  \subparagraph*{Related work.}
  An alternative route to show Theorem~\ref{thm:main} was pointed out to us
  in~\cite{cstheory} after we posted a first version of this note. Specifically,
  it is shown in~\cite{tamassia1989planar} that we can embed an input planar graph
  of maximal degree 4 into a grid in linear time, and the \emph{grid embedding}
  computed in~\cite{tamassia1989planar} is a topological embedding (satisfying
  additional properties). The embedding can then be converted to a topological
  embedding into a wall graph, thanks to the fact that the input graph is
  degree-3.  This proof method is different from the one which we present in
  this note; further, it would achieve a better bound, and is arguably simpler.

  Similar results to Theorem~\ref{thm:main} are already known to embed planar
  graphs as minors of grid graphs \cite[Result~1.5]{robertson1994quickly}, but
  we are not aware of a discussion of the complexity of this process. In
  particular, the proof in~\cite{robertson1994quickly} does not seem easy to
  adapt to a polynomial-time process, as it involves, e.g., the computation of
  Hamiltonian circuits.

  Theorem~\ref{thm:main} is not related to the celebrated theorem by Chekuri and
  Chuzhoy~\cite{chekuri2016polynomial}, which shows that we can embed a grid in
  any graph with sufficiently large but polynomial treewidth. In our result, the
  graph that we embed is an arbitrary planar degree-3 graph, but importantly the
  target of the embedding is always a wall graph. This is why we are also able
  to show a deterministic PTIME bound, whereas the algorithm
  of~\cite{chekuri2016polynomial} runs in randomized PTIME.

\subparagraph*{Acknowledgements.} I am grateful to Pierre Bourhis, Mikaël Monet,
and Pierre Senellart for discussions about the problem, in particular I warmly thank Mikaël
Monet for proofreading the note in detail, identifying some problems, and
discussing possible solutions.

\section{Preliminaries and result statement}
\label{sec:prelim}

\begin{figure}
  \centering
  \begin{tikzpicture}[xscale=1.5]
    \wall{6}{8}{5}{7}{4}
  \end{tikzpicture}
  \caption{The $(6,8)$-wall}
  \label{fig:wall}
\end{figure}

An (undirected) \emph{graph} $G = (V, E)$ consists of a set of \emph{vertices}~$V$
and a set of \emph{edges}~$E$ which are pairs of distinct vertices that are said
to be \emph{adjacent}. The graph is
\emph{degree-3} if the \emph{degree} of each vertex, i.e., the number of edges
in which it appears, is at most~$3$. The graph is \emph{planar} if it can be
drawn on the plane without edge crossings. A \emph{path} connecting~$u \in V$
and $v \in V$ in~$G$ is a sequence of vertices $u = w_1, \ldots, w_n = v$ such
that $w_i$ and $w_{i+1}$ are adjacent for all $1 \leq i \leq n$. The \emph{inner
vertices} of the path are $w_2, \ldots, w_{n-1}$: there may be none if the
\emph{length}~$n$ of the path is~$0$ (i.e., $u = v$) or~$1$ (i.e., $u$ and~$v$
are adjacent and the path directly takes that edge).

We define \emph{wall graphs} following~\cite{dragan2011spanners}.
For integers $r$ and $s$, the \emph{$(r,s)$-wall} is
the graph $W_{r,s}$ with vertices $\{(i, j) \mid 1 \leq i \leq r, 1 \leq j \leq
s\}$ and edges $\{\{(i,j), (i,j+1)\} \mid 1 \leq i \leq r, 1 \leq j < s\}$ and
$\{\{(i,j), (i+(-1)^{i+j},j) \mid 1 \leq i < r, 1 \leq j \leq s\}$. An example wall graph is given on
Figure~\ref{fig:wall}. Note that wall graphs are always degree-3 and planar.

A \emph{(topological) embedding} of a graph~$G = (V, E)$ into a graph~$H = (V',
E')$ consists of an injective function $f\colon V\to V'$ and a function $g$
mapping each edge $\{u, v\}$ of~$E$ to a path $g(\{u, v\})$ in~$H$ connecting $f(u)$
and~$f(v)$ such that all these paths are vertex-disjoint, i.e., for any two
edges $\{u, v\}$ and $\{u', v'\}$ of~$E$, the set of inner vertices of the paths
$g(\{u, v\})$ and $g(\{u', v'\})$ are disjoint.

Our goal is to show Theorem~\ref{thm:main}.
The proof proceeds in three steps.  Some of the details are a bit tedious, and the
goal of this note is to make them precise, but the
overall strategy is rather simple.

First, we use the result
of~\cite{schnyder1990embedding} to obtain in linear time a drawing of the input graph with integer
coordinates. The idea will be to convert this to an embedding in a sufficiently
large wall graph, by scaling up the drawing. We also show some geometric
lemmas on the drawing to bound how close the elements of the drawing can be: how
close can a vertex be to a segment to which it does not belong, and how close two
segments sharing a common endpoint can be when they are sufficiently far away
from the common endpoint.

Second, we explain how edges can be translated to paths in the wall graph, by
showing that any straight line between points with integer coordinates can be
``approximated'' by a path in a wall graph which ``stays close'' to the line.

Third, we use the approximate paths to transform the drawing of the graph into
something that resembles a topological embedding except that the paths may
overlap close to their endpoints.

Fourth, we explain how to modify the embedding
close to the vertices to fix this problem.

\section{Drawing the input graph~$G$ and showing lemmas about the drawing}
\label{sec:drawing}

Given the input graph~$G = (V, E)$, let~$n$ be its number of vertices. The problem is
trivial if $n \leq 2$, so we assume $n \geq 3$. We know by the work of
Schnyder~\cite{schnyder1990embedding} that we can compute in linear time a
\emph{straight line embedding} of~$G$ in the $n-2$ by $n-2$ grid. This implies
that we have an injective \emph{drawing function}
$\delta\colon V \to \{0, \ldots, n-1\}^2$
mapping each vertex $u \in V$ to a point with integer coordinates in this range, so that
the edges do not cross, i.e., for any two edges $e = \{u,v\}$ and $e' =
\{u',v'\} \in E$ with $e \neq e'$,
the segments $[\delta(u), \delta(v)]$ and $[\delta(u'), \delta(v')]$ do not
cross (except that they may share the same endpoints if some vertices among $u$,
$v$, $u'$, and $v'$ are equal).

We will want to replace the segments of the drawing by paths in a suitable wall
graph. To do this and ensure that the paths do not intersect, it will be
important to understand some properties of the drawing.
First, we must understand how close a segment between points with integer
coordinates can pass by another point with integer coordinates which is not on
the segment. For this, we show a lemma:

\begin{figure}
  \begin{subfigure}{.5\linewidth}
  \begin{tikzpicture}[scale=.4]
    \tikzgrid{16}{18}
    \node[cross=2pt] (p) at (2,2) {};
    \node (pl) at (1.5,1.5)  {$p$};
    \node[cross=2pt] (r) at (16,14) {};
    \node (rl) at (15.5,14.5)  {$r$};
    \node[cross=2pt] (q) at (6,6) {};
    \node (ql) at (5.5,6.5)  {$q$};
    \node (qq) at (6.5,5.4) {};
    \node (qqq) at (5.9,6.1) {};
    \draw[very thick,orange] ($(6.3,5.65)+(-.02,.02)$)--($(q)+(-.02,.02)$);
    \node[cross=2pt] (q2) at (6,6) {};
    \draw (p) -- (r);
  \end{tikzpicture}
  \caption{Situation of Lemma~\ref{lem:pass}. The point of the lemma is to give
    a lower bound on the distance
    between $q$ and the segment $[p,r]$ (in orange)}
  \label{fig:pass}
  \end{subfigure}
  \hfill
  \begin{subfigure}{.45\linewidth}
  \begin{tikzpicture}[scale=1]
    \tikzgrid{4}{4}
    \draw[very thick, orange] (1,2) -- (3,2);
    \draw[very thick, orange] (2,1) -- (2,3);
    \node[cross=4pt,very thick] (q) at (2,2) {};
    \draw[blue,dashed] (.5,-.5) -- (4.5,3.5);
    \draw[blue,very thick] (2,2) -- (2.5,1.5);
  \end{tikzpicture}
  \caption{Argument used in the proof of Lemma~\ref{lem:pass}. The orange area
    corresponds to the point by which a segment cannot pass, extending to length
    $1/n$ on each side of the point. The dashed blue
    line is the closest that the segment may pass, and the distance bound in thick blue
    is the one shown in the lemma, namely, $1/(\sqrt{2} n)$}
  \label{fig:pythagoras}
  \end{subfigure}
  \caption{Illustrations for Lemma~\ref{lem:pass}}
  \label{fig:lempass}
\end{figure}

\begin{lemma}
  \label{lem:pass}
  Let $p, q, r$ be pairwise distinct points with integer coordinates between~$0$ and~$n-1$. Assume
  that the point $q$ is not on the segment~$[p,r]$. Then the distance
  between~$q$ and the segment $[p,r]$ is at least $\frac{1}{\sqrt{2} n}$.
\end{lemma}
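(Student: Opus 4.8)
The plan is to reduce everything to the elementary fact that the (signed, doubled) area of a triangle with integer vertices is an integer, so that it cannot be both nonzero and arbitrarily small. First I would recall the standard formula for the distance from a point to a line. Writing $p = (p_x, p_y)$, $q = (q_x, q_y)$, and $r = (r_x, r_y)$, the distance from~$q$ to the line~$L$ supporting the segment~$[p,r]$ is
\[
  d(q, L) = \frac{\lvert D \rvert}{\lVert r - p \rVert},
  \qquad
  D \colonequals (r_x - p_x)(q_y - p_y) - (r_y - p_y)(q_x - p_x),
\]
where $\lVert r - p\rVert$ is the Euclidean length of the segment and $\lvert D\rvert$ is twice the area of the triangle~$pqr$. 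Since the segment~$[p,r]$ is a subset of the line~$L$, the distance from~$q$ to the segment is always at least its distance to~$L$, so it suffices to bound $d(q, L)$ from below — except in the degenerate case where $q$ actually lies on~$L$, which I would treat separately at the end.

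For the main case, suppose $q \notin L$. Then $D$ is a \emph{nonzero} integer: it is nonzero precisely because $q$ is not on the line through~$p$ and~$r$, and it is an integer because $p$, $q$, $r$ all have integer coordinates. Hence $\lvert D\rvert \ge 1$. The denominator is the length of a segment whose endpoints lie in $\{0, \ldots, n-1\}^2$, so the two coordinates differ by at most $n-1$ and therefore $\lVert r - p\rVert \le \sqrt{2}\,(n-1) \le \sqrt{2}\, n$. Combining the two estimates yields $d(q, L) \ge \frac{1}{\sqrt{2}\, n}$, and since the distance to the segment dominates the distance to the line, the claimed bound follows.

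It then remains to handle the degenerate case where $q \in L$, i.e., $D = 0$, so that $q$ is collinear with~$p$ and~$r$ but, by hypothesis, not on the segment~$[p,r]$. Here the line formula degenerates to~$0$ and is useless, but the integrality of the coordinates rescues us: the point of $[p,r]$ closest to~$q$ is the endpoint of the segment nearest to~$q$, so the distance from~$q$ to the segment equals $\min\!\left(\lVert q - p\rVert, \lVert q - r\rVert\right)$. As $p$, $q$, $r$ are pairwise distinct points with integer coordinates, any two of them are at Euclidean distance at least~$1$, so this minimum is at least~$1 \ge \frac{1}{\sqrt{2}\, n}$, and the bound holds in this case as well.

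I expect the computation itself to be entirely routine; the quantitative heart of the argument is just that twice the area of a lattice triangle is a positive integer while the segment length is bounded by the grid diameter $\sqrt{2}\,n$. The only points that need care — and thus the main (mild) obstacle — are conceptual rather than computational: remembering that the distance to the segment is bounded below by the distance to the supporting line (so that the clean line formula may legitimately be used), and isolating the collinear case, where the determinant vanishes but the pairwise-distinctness of the integer points still forces a gap of at least~$1$.
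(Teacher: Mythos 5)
Your proof is correct, and it takes a genuinely different route from the paper's. The paper first disposes of axis-parallel segments, then proves an intermediate claim: any point of the segment having one integer coordinate has its other coordinate at distance at least $1/n$ from the nearest integer (because that coordinate is a rational whose denominator $c-a$ has absolute value at most $n$); it then concludes by a convexity argument, namely that the segment must avoid a cross-shaped neighbourhood of $q$ with arms of length $1/n$, and hence stays at distance at least $\frac{1}{\sqrt{2}}\cdot\frac{1}{n}$ from $q$. You instead invoke the point-to-line distance formula and observe that the numerator $\lvert D\rvert$, being twice the area of a lattice triangle with non-collinear vertices, is a nonzero integer and hence at least~$1$, while the denominator $\lVert r-p\rVert$ is at most $\sqrt{2}\,(n-1)$; the collinear-but-off-segment case is correctly isolated and handled via the nearest endpoint, where pairwise distinctness of lattice points gives a distance of at least~$1$. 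Your argument is shorter, avoids the case split on axis-parallel segments, replaces the paper's somewhat informal convexity step with a one-line determinant bound, and even yields the marginally stronger constant $\frac{1}{\sqrt{2}\,(n-1)}$ in the non-degenerate case. Both proofs ultimately rest on the same integrality phenomenon --- a nonzero quantity built from integer coordinates cannot be arbitrarily small --- but yours packages it in the cleaner and more standard form of the lattice-triangle area.
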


The lemma is illustrated as Figure~\ref{fig:pass}.

\begin{proof}
  If the segment is parallel to one of the axes, then the result is trivial
  because its distance to points with integer coordinates not in the segment is
  clearly at least~$1$, hence at least~$\frac{1}{\sqrt{2} n}$. Hence, we assume that the
  segment is not parallel to any axis, hence the first coordinates of~$p$
  and~$r$ are different, and the second coordinates of~$p$ and~$r$ are also different.

  We first show the following claim (*): for all points of the segment $[p,r]$ having exactly one
  integer coordinate in the range $\{0, \ldots, n\}$, the difference between the other coordinate and the
  nearest integer is at least $1/n$.

  Indeed, let $(a, b)$ and $(c, d)$ be the respective coordinates of~$p$
  and~$r$. 
  The points in the segment are those with coordinates $(a + x(c-a), b + x(d-b))$
  for $0 \leq x \leq 1$. Assume that some coordinate is integer, say the first,
  then $a + x(c-a)$ is an integer, say $e$. Hence, $x = \frac{e-a}{c-a}$,
  recalling that $c\neq a$, where $a$ and $e$ and $c$ are integers in $\{0,
  \ldots, n-1\}$. Then the second coordinate is $b + x(d-b)$. This can be
  expressed as a fraction with integer numerator and with denominator $c-a$, which
  is an integer of absolute value at most~$n$. Hence, if the value of the second
  coordinate is not an
  integer, its distance to the nearest integer is at least $1/n$.
  The same argument applies if we exchange the role of the first and second
  coordinates. This establishes claim (*) above.

  Now, we use convexity. Considering the point $q$, the segment does not go via
  that point, so considering all points with one integer coordinate matching
  that of~$q$ and the other integer coordinate differing from the coordinate
  of~$q$ by strictly less than $1/n$, we know that the segment cannot go via
  these points. So the closest the segment can go is $\frac{1}{\sqrt{2}} \times
  \frac{1}{n}$ (see Figure~\ref{fig:pythagoras} for an illustration), concluding
  the proof.
\end{proof}

Second, we need to understand how close two segments with integer coordinates
that share a common endpoint can be from one another. Of course, close to the
endpoint, they can be arbitrarily close; but we need to show that they are
sufficiently far away once we are sufficiently far from the common endpoint.
For this, let us define the notion of \emph{box}:

\begin{definition}
  \label{def:box}
  Given a point $p$ and length $\epsilon > 0$, the \emph{box} of radius
  $\epsilon$ centered on~$p$ is the square region with center~$p$ and sides $2
  \epsilon$.
\end{definition}

We can then claim:

\begin{lemma}
  \label{lem:boxdist}
  Let $p, q, r$ be pairwise distinct points with integer coordinates, and consider the segments
  $[p,q]$ and $[p,r]$. We assume that $[p,q]$ and $[p,r]$ are not collinear,
  so $p$ is their only intersection.

  Let $\epsilon > 0$, and consider the box of radius~$\epsilon$
  centered on~$p$. Let $q'$ and $r'$ be arbitrary points of $[p,q]$ and
  $[p,r]$ respectively such that both are outside the box. Then
  the distance between $q'$ and $r'$ is
  at least $\frac{\epsilon}{2 n^2}$.
\end{lemma}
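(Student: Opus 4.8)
The plan is to reduce the two-segment question to a point-versus-line question. Since $r'$ lies on the whole line supporting the segment $[p,r]$, the distance $|q'-r'|$ is at least the perpendicular distance from $q'$ to that line. Writing $\theta$ for the angle $\angle q p r$ between the two segments at their common endpoint~$p$, elementary trigonometry gives that this perpendicular distance equals $|q'-p|\sin\theta$. Thus the whole argument splits into two independent lower bounds: one on $|q'-p|$, coming from the hypothesis that $q'$ lies outside the box, and one on $\sin\theta$, coming from the fact that $p$, $q$, $r$ have integer coordinates. The common endpoint then plays no further role.

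For the first bound, I would observe that a point outside the box of radius $\epsilon$ centered on~$p$ has $L_\infty$-distance at least $\epsilon$ from~$p$, and since the Euclidean distance dominates the $L_\infty$-distance, we obtain $|q'-p| \geq \epsilon$. Combined with the trigonometric identity above, this already yields $|q'-r'| \geq \epsilon \sin\theta$.

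For the second bound, the crux is that the angle between two non-collinear integer vectors cannot be too small. Setting $\vec u = q-p$ and $\vec v = r-p$, we have $\sin\theta = \frac{|u_1 v_2 - u_2 v_1|}{|\vec u|\,|\vec v|}$. The numerator is a nonzero integer (nonzero precisely because the segments are not collinear), so it is at least~$1$; and, since the points come from the drawing and hence have coordinates in $\{0,\dots,n-1\}$, each coordinate of $\vec u$ and $\vec v$ lies in $\{-(n-1),\dots,n-1\}$, so $|\vec u|,|\vec v| \leq \sqrt 2\,(n-1)$ and the denominator is at most $2(n-1)^2 < 2n^2$. This gives $\sin\theta \geq \frac{1}{2n^2}$, and substituting into $|q'-r'| \geq \epsilon\sin\theta$ completes the proof.

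I expect the only delicate point to be the trigonometric reduction of the first paragraph: one must check that the distance from $q'$ to the supporting line of $[p,r]$ is genuinely $|q'-p|\sin\theta$ regardless of whether $\theta$ is acute or obtuse (it is, since the perpendicular-distance formula depends only on $\sin\theta = \sin(\pi-\theta)$), and that $r'$ really does lie on that line, so that passing from the point-to-line distance to $|q'-r'|$ is valid. Everything else is a routine estimate on integer quantities; in particular, reducing to the supporting line disposes of all admissible positions of $r'$ along its segment at once, so no explicit minimization over the positions of $q'$ and $r'$ is needed.
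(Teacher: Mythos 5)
Your proof is correct, but it takes a genuinely different route from the paper's. The paper first reduces to the two points $q''$, $r''$ at distance exactly $\epsilon$ from~$p$ on each segment, using an auxiliary geometric claim (Claim~\ref{clm:points}, about rays leaving a triangle with acute base angles) to argue that moving outward along the segments cannot decrease the distance; it then bounds the distance between $q''$ and $r''$ by combining Thales's theorem with the point-to-segment bound of Lemma~\ref{lem:pass}. You instead bound $|q'-r'|$ from below by the perpendicular distance from $q'$ to the supporting line of $[p,r]$, write that distance as $|q'-p|\sin\theta$, and then bound the two factors separately: $|q'-p|\geq\epsilon$ because $q'$ is outside the $L_\infty$-ball of radius $\epsilon$, and $\sin\theta\geq \frac{|u_1v_2-u_2v_1|}{|\vec u||\vec v|}\geq\frac{1}{2(n-1)^2}$ because the numerator is a nonzero integer determinant and the integer coordinates lie in $\{0,\dots,n-1\}$. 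Your decomposition is self-contained (it needs neither Claim~\ref{clm:points} nor Lemma~\ref{lem:pass}), handles all positions of $q'$ and $r'$ at once without a monotonicity argument, and even yields the marginally sharper constant $2(n-1)^2$ in place of $2n^2$; the paper's version has the merit of reusing Lemma~\ref{lem:pass}, which is needed elsewhere in the construction anyway. One small presentational point: the lemma as stated does not explicitly restrict the coordinates to $\{0,\dots,n-1\}$, so you are right to flag that this hypothesis (inherited from the drawing of Section~\ref{sec:drawing}) is what makes the denominator bound go through.
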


We illustrate the lemma in Figure~\ref{fig:boxdist}.
We need a simple auxiliary claim to show the result.

\begin{figure}
  \centering
  \begin{tikzpicture}[scale=1, extended line/.style={shorten >=-0cm,shorten
    <=-1cm}]
    \tikzgrid{10}{10}
    \node[cross=2pt] (p) at (2,2) {};
    \node (pl) at (1.7,1.7)  {$p$};
    \node[cross=2pt] (q) at (4,8) {};
    \node (ql) at (4.5,8.5)  {$q$};
    \node[cross=2pt] (r) at (10,8) {};
    \node (rl) at (10.5,8.5)  {$r$};
    \draw[name path=pr] (p) -- (r);
    \draw[name path=pq] (p) -- (q);
    \draw [blue,name path=rect] (1.3,1.3) rectangle (2.7,2.7);
    \draw [name path=circ] (2,2) circle (0.7);
    \draw[to-to] (1.3,1.2) -- (2.7,1.2);
    \node (eps) at (2.3,.9) {$2\epsilon$};
    \path [name intersections={of=circ and pq,by=qq}];
    \path [name intersections={of=circ and pr,by=rr}];
    \draw [orange,very thick,to-to] (qq) -- (rr);
    \node[cross=2pt] (qqq) at (qq) {};
    \node[cross=2pt] (rrr) at (rr) {};
    \node (qql) at (2.2,3) {$q''$};
    \node (qql) at (3,2.2) {$r''$};
    \draw [dashed,name path=para,extended line=1cm] (q) --
    +($(rrr)+(rrr)+(rrr)+(rrr)+(rrr)+(rrr)+(rrr)+(rrr)+(rrr)+(rrr)+(rrr)+(rrr)-(qqq)-(qqq)-(qqq)-(qqq)-(qqq)-(qqq)-(qqq)-(qqq)-(qqq)-(qqq)-(qqq)-(qqq)$);
    \path [name intersections={of=para and pr,by=rrrr}];
    \node[cross=2pt] (rrrrr) at (rrrr) {};
    \node (rrrrl) at (7,6.5) {$r'''$};
  \end{tikzpicture}
  \caption{Illustration of Lemma~\ref{lem:boxdist}. The lemma is shown by
  establishing 
  a lower bound on the orange distance, depending on the size of the box
  (in blue).}
  \label{fig:boxdist}
\end{figure}
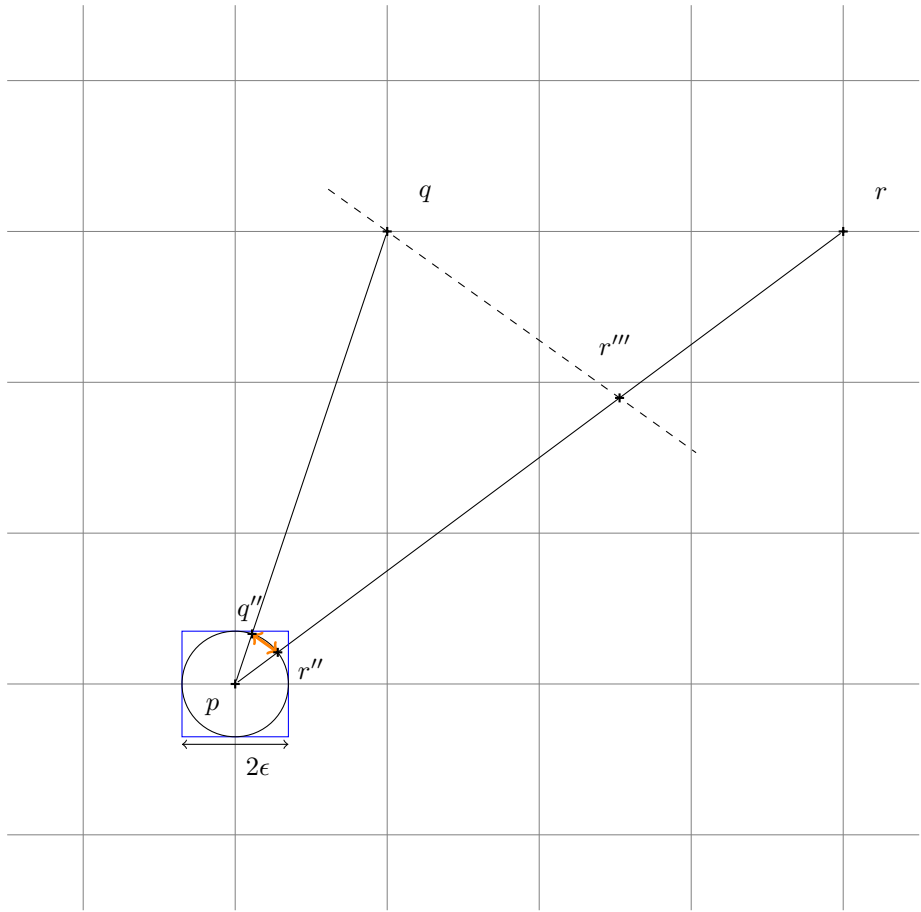

\begin{claim}
  \label{clm:points}
  Consider a triangle $p$, $q''$, $r''$, where the angles on $q''$ and on $r''$ are
  acute. Let $q$ be a point of the ray with direction
  $(p,q'')$
  starting at~$q''$ 
  and let $r$ be a point on the ray with direction
  $(p,r'')$
starting at~$r''$.
  Then the distance between $q$ and $r$ is greater than or equal to
  the distance between $q''$ and $r''$.
\end{claim}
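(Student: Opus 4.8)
The plan is to reduce the claim to an elementary inequality about a quadratic form. First I would place $p$ at the origin and set $u = q'' - p$ and $v = r'' - p$, so that the points lying on the two rays beyond $q''$ and $r''$ are exactly $q = t\,u$ and $r = s\,v$ for real parameters $t, s \geq 1$ (the values $t = s = 1$ recovering $q''$ and $r''$ themselves). After squaring, the statement to prove becomes
\[
  |t\,u - s\,v|^2 \;\ge\; |u - v|^2 \qquad \text{for all } t, s \ge 1,
\]
and for the analysis I would study $g(t,s) = |t\,u - s\,v|^2 - |u - v|^2$ and aim to show $g \ge 0$ on the region $t, s \ge 1$.

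Next I would translate the two acuteness hypotheses into dot-product inequalities. The angle at $q''$ is the angle between the vectors $p - q'' = -u$ and $r'' - q'' = v - u$, so it is acute exactly when $(-u)\cdot(v - u) > 0$, i.e.\ $|u|^2 > u \cdot v$; symmetrically, the angle at $r''$ is acute iff $|v|^2 > u \cdot v$. Writing $A = |u|^2$, $B = |v|^2$, and $c = u \cdot v$, the hypotheses become $A > c$ and $B > c$. I would also record that $A, B > 0$ (the three points being distinct) and that $AB > c^2$ by the strict Cauchy--Schwarz inequality, since $u$ and $v$ are not collinear (the triangle being non-degenerate). Expanding gives $g(t,s) = (t^2-1)A - 2(ts-1)c + (s^2-1)B$.

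The core step is to minimize $g$ over the closed region $t, s \ge 1$. Its quadratic part $A t^2 - 2c\,ts + B s^2$ is positive definite, because $A > 0$ and $AB - c^2 > 0$; hence $g$ is coercive and attains a minimum on the region. At any interior critical point the equations $A t = c s$ and $B s = c t$ would, on multiplying, force $AB = c^2$, contradicting $AB > c^2$; so the minimum must lie on the boundary. On the edge $s = 1$ one computes $g(t,1) = (t-1)\big[(t+1)A - 2c\big]$, which is nonnegative for $t \ge 1$ since $(t+1)A - 2c \ge 2A - 2c = 2(A - c) > 0$, and the edge $t = 1$ is handled symmetrically. Therefore $g \ge 0$ everywhere on the region, with equality only at the corner $t = s = 1$, which is exactly the claim.

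The main obstacle to watch for is that the tempting stepwise argument — first slide $q$ outward to increase its distance to $r''$, and then slide $r$ outward — does not actually work: once $q$ has been pushed far from $p$, the angle at $r''$ may become obtuse, so moving $r$ outward can momentarily \emph{decrease} the distance before it eventually grows. This is precisely why I would argue globally with the positive-definite quadratic form rather than one endpoint at a time; verifying that the unique minimizer sits at the corner $(1,1)$ is exactly what rules out this failure mode.
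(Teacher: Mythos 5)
Your proof is correct, and it takes a genuinely different route from the paper's. The paper argues synthetically in two steps: it first reduces to the case $q = q''$ by drawing the line through $q$ parallel to $(q'',r'')$, meeting $(p,r)$ at a point $r'$; the triangle $p,q,r'$ is similar to $p,q'',r''$ (so its base is at least as long and its base angles are still acute), and a preliminary swap of the roles of $q$ and $r$ guarantees that $r$ still lies beyond $r'$ on its ray. It then handles the remaining motion of $r$ alone by noting that, since the angle at $r''$ is acute, the foot of the perpendicular from $q''$ to the line $(p,r'')$ lies inside $[p,r'']$, so the distance from $q''$ to a point moving outward along the ray past $r''$ is non-decreasing. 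You instead write $q = tu$, $r = sv$ and minimize $g(t,s) = |tu-sv|^2 - |u-v|^2$ over the region $\{t, s \ge 1\}$: positive definiteness of the quadratic part (from non-collinearity of $u$ and $v$) gives coercivity and rules out interior critical points, and each acuteness hypothesis makes $g$ nonnegative on one boundary edge; the computation $g(t,1)=(t-1)\bigl[(t+1)A-2c\bigr]\ge 0$ checks out since $A>0$ and $A>c$. What each approach buys: your algebraic version is self-contained, makes the role of each hypothesis completely explicit (non-degeneracy for definiteness, one acute angle per edge), and avoids configuration bookkeeping, at the cost of a small optimization argument; the paper's version is elementary and picture-driven but must be set up carefully precisely to dodge the failure mode you flag at the end. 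Your closing caveat about the naive ``slide $q$ outward, then slide $r$ outward'' argument is apt, but note that the paper does not commit that error: its first step is a dilation from $p$, which moves both base vertices simultaneously and preserves both base angles, rather than a one-endpoint slide.
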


The claim is illustrated on Figure~\ref{fig:points}. Let us prove it:

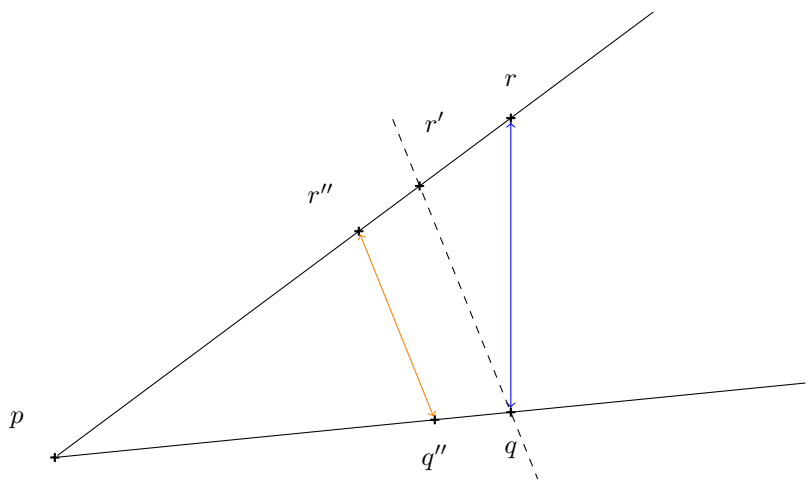
\begin{figure}
  \begin{tikzpicture}[extended line/.style={shorten >=-1cm,shorten <=-1cm}]
    \node[cross=2pt] (p) at (0, 0) {};
    \node (pl) at (-.5, .5) {$p$};
    \node[cross=2pt] (q) at (5, .5) {};
    \node[cross=2pt] (r) at (4, 3) {};
    \node[cross=2pt] (q2) at (6, .6) {};
    \node[cross=2pt] (r2) at (6, 4.5) {};
    \node[cross=2pt] (r3) at (4.8, 3.6) {};
    \node (qq) at (10, 1) {};
    \node (rr) at (8, 6) {};
    \draw (p) -- (qq);
    \draw[orange,to-to] (r) -- (q);
    \draw[blue,to-to] (r2) -- (q2);
    \draw (p) -- (rr);
    \node (ql) at (5, 0) {$q''$};
    \node (rl) at (3.5, 3.5) {$r''$};
    \node (ql2) at (6, 0.1) {$q$};
    \node (rl2) at (6, 5) {$r$};
    \node (rl3) at (5, 4.45) {$r'$};
    \draw[dashed,extended line = 1cm] (r3) -- (q2);
  \end{tikzpicture}
  \caption{Illustration of Claim~\ref{clm:points}: the distance in blue is
  greater than or equal to the distance in orange, provided the angles on $r''$
  and $q''$ are acute. We also illustrate the first step of the proof.}
  \label{fig:points}
\end{figure}

\begin{proof}
  We first reduce to the case where $q = q''$.
  This step is illustrated on Figure~\ref{fig:points}.
  Consider the line $(q'', r'')$.
  Up to swapping the points $q$ and $q''$ with $r$ and $r''$, without loss of
  generality we can assume that the distance of point $q$ to that line is no
  greater than the distance of point $r$ to that line.
  
  Let $r'$ the
  intersection with $(p,r)$ of the parallel line to $(q'', r'')$. It is
  immediate (e.g., by Thales's theorem) that the distance between $q$ and $r'$
  is no smaller than the distance between $q''$ and $r''$. Further, considering
  the triangle $p$, $q$, $r'$, the angles on $q$ and $r'$ in that triangle are
  respectively equal to the angles on $q''$ and $r''$ on the original triangle,
  so they are also acute; further the point $r$ is still on the ray starting at $r'$
  with direction $(p,r')$. Hence, it suffices to study the case where $q = q''$.
  (This case is not illustrated.)

  Now, in this case, since the angle on $r''$ is acute, we know that the
  projection of $q''$ on $(p,r'')$ is in the segment $[p,r'']$. This means that,
  letting $r'$ be this projection, and moving along the ray starting at $r'$
  with direction $(p,r')$, the distance to $q''$ is not decreasing. Doing so, we
  first encounter $r''$ and then $r$, so we conclude that indeed the distance
  between $q''$ and $r''$ is no greater than the distance between $q=q''$ and $r$.
\end{proof}

We now prove Lemma~\ref{lem:boxdist}:

\begin{proof}[Proof of Lemma~\ref{lem:boxdist}]
  Let $q''$ and $r''$ be the points on $[p,q]$ and $[p,r]$ respectively at
  distance $\epsilon$ from $p$. These two points are inside the box. As the
  triangle $p$, $q''$, $r''$ is isosceles, the angles on $q''$ and $r''$ must be
  acute. Thus, to show the bounds on points $p'$ and $q'$ as in the lemma
  statement, by Claim~\ref{clm:points} it suffices to show the claim on
  the distance between $q''$ and $r''$.

  First, up to exchanging $q$ and $r$ if necessary, we assume that the line
  parallel to $(q'',r'')$ going through $q$ intersects the line $(p,r)$ within the
  segment $[p,r]$. Let $r'''$ be the intersection point. Considering the point
  $q$ and the segment $[p,r]$, we have by Lemma~\ref{lem:pass} a lower bound of
  $\frac{1}{\sqrt{2} n}$ on the distance between $q$ and $[p,r]$, translating to
  the same lower bound on the distance between $q$ and the point $r'''$ which
  belongs to $[p,r]$.

  We now use Thales's theorem to show that the ratio between the lengths of
  $[p,q'']$ and $[p,q]$ is the same as the ratio between the lengths
  of~$[q'',r'']$ and $[q,r''']$. From the lower bound of
  Lemma~\ref{lem:pass}, letting 
  $x$ be the distance between $p$ and $q$, we have:
  \[
    \frac{\epsilon}{x} \leq d \sqrt{2} n
  \]
  We deduce:
  \[
    d \geq \frac{\epsilon}{x \sqrt{2}n}
  \]
  Now, $x \leq \sqrt{2} n$ given the range of coordinates, so we deduce:
  \[
    d \geq \frac{\epsilon}{2 n^2}
  \]
  Thus, we have shown our lower bound on the distance between $q''$ and $r''$,
  which as we had argued implies a lower bound on point $q'$ and $r'$ as in the
  lemma statement.
\end{proof}

\section{Approximating lines with paths}

Our idea is to start from the drawing function defined in the previous section
and transform it to a
topological embedding into a sufficiently large wall graph. To do this, we will first argue how each
straight line segment in the drawing can be transformed to a path in a wall graph which stays
sufficiently close to the line. Formally, we show:

\begin{figure}
  \begin{tikzpicture}
    \wall{8}{14}{7}{13}{6}
    \draw (3,8) -- (13,2);
    \node (n1l) at (3,8.5) {$p_0$};
    \node (n1l) at (5.3,7.35) {$p_1'$};
    \node (n1l) at (6.3,6.35) {$p_2'$};
    \node (n1l) at (6.7,5.35) {$p_3'$};
    \node (n1l) at (8,5.35) {$p_3$};
    \node (n1l) at (10.3,4.35) {$p_4'$};
    \node (n1l) at (11.3,3.35) {$p_5'$};
    \node (n1l) at (11.7,2.35) {$p_6'$};
    \node (n1l) at (13,2.35) {$p_6$};
    \node[draw,circle,blue] (n1) at (3,8) {};
    \node[draw,circle,blue] (n2) at (4,8) {};
    \node[draw,circle,blue] (n3) at (5,8) {};
    \node[draw,circle,blue] (n4) at (5,7) {};
    \node[draw,circle,blue] (n5) at (6,7) {};
    \node[draw,circle,blue] (n6) at (6,6) {};
    \node[draw,circle,blue] (n7) at (7,6) {};
    \node[draw,circle,blue] (n8) at (7,5) {};
    \node[draw,circle,blue] (n9) at (8,5) {};
    \node[draw,circle,blue] (n10) at (9,5) {};
    \node[draw,circle,blue] (n11) at (10,5) {};
    \node[draw,circle,blue] (n12) at (10,4) {};
    \node[draw,circle,blue] (n13) at (11,4) {};
    \node[draw,circle,blue] (n14) at (11,3) {};
    \node[draw,circle,blue] (n15) at (12,3) {};
    \node[draw,circle,blue] (n16) at (12,2) {};
    \node[draw,circle,blue] (n17) at (13,2) {};
    \draw[thick,blue] (n1) -- (n2) -- (n3) -- (n4) -- (n5) -- (n6) -- (n7) -- (n8) --
    (n9) -- (n10) -- (n11) -- (n12) -- (n13) -- (n14) -- (n15) -- (n16) --
    (n17);
  \end{tikzpicture}
  \caption{Approximating a segment by a path in a wall graph. The points $p_i$
  are equal to $p_i'$ unless otherwise written.}
  \label{fig:wallapprox}
\end{figure}
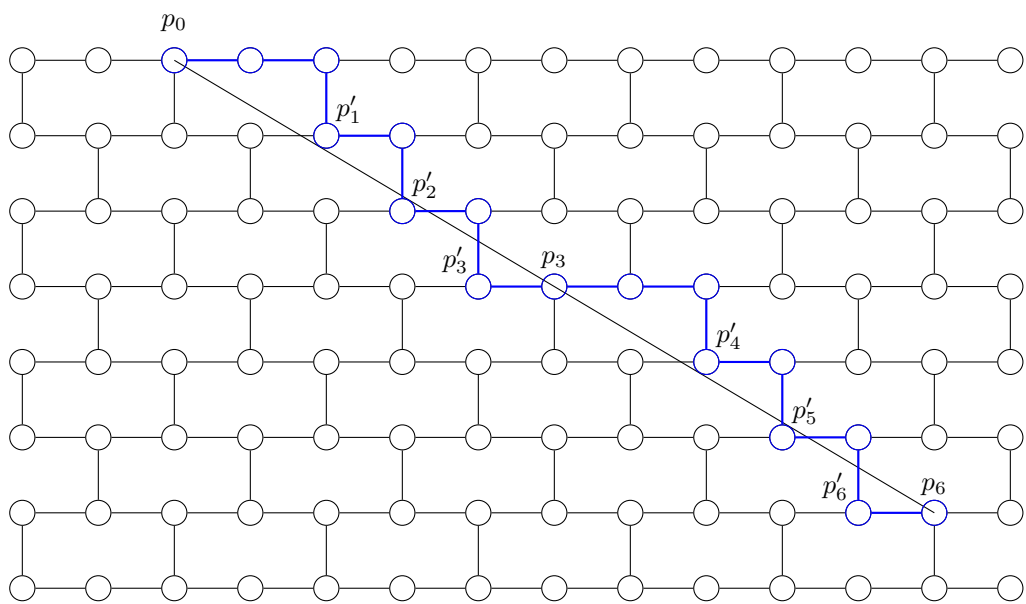

\begin{lemma}
  \label{lem:line}
  Let $r, s$ be two integers, and consider two distinct integer points $(i,j)$
  and $(i',j')$ that are vertices of the $(r,s)$-wall $W_{r,s}$, with $1 \leq i, i' \leq r$ and
  $1 \leq j, j' \leq s$. Consider the segment $[(i,j), (i',j')]$. Then we can
  construct in~$O(r \times s)$ a simple path from~$(i,j)$ to~$(i',j')$ in~$W_{r,s}$
  such that, for each traversed vertex, its distance to the segment is at
  most~$2$.
\end{lemma}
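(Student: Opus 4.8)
The plan is to follow the segment by a monotone staircase in the wall, exploiting the fact that the two coordinate directions play very different roles. The first thing I would record is the structure of $W_{r,s}$: the edges $\{(i,j),(i,j+1)\}$ are \emph{all} present, so moving by $\pm 1$ in the second coordinate is always possible, whereas the other edges follow a brick pattern, the rung between columns $i$ and $i+1$ in row $j$ being present exactly when $i+j$ is even. The single observation that makes everything work is the following parity fact: for a fixed column boundary $i/i{+}1$ and any two consecutive rows $j,j+1$, exactly one of $i+j$ and $i+j+1$ is even, so exactly one of the two rows carries the rung. Hence, whenever we wish to cross from column $i$ to column $i+1$ near some row, a usable rung is available within one row of it, reached by a single free step in the second coordinate.

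Next I would dispose of the easy cases and set up a WLOG. If $i=i'$ the segment is vertical and the straight path in the second coordinate already lies on it (distance $0$). Otherwise, reversing the path if necessary, assume $i<i'$; the sign of $j'-j$ will be absorbed by the rounding below, so it needs no separate treatment. Writing $\Delta i=i'-i$ and $\Delta j=j'-j$, let $J(a)=j+(a-i)\,\Delta j/\Delta i$ and $I(b)=i+(b-j)\,\Delta i/\Delta j$ be the second and first coordinate of the supporting line at a given first, resp.\ second, coordinate, and split according to which coordinate dominates.

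In the regime $|\Delta i|\ge|\Delta j|$ (the main case) I would build the path column by column: at each column $a\in\{i,\dots,i'\}$ the target vertex is $(a,r_a)$ with $r_a$ the integer nearest to $J(a)$. Since $|\Delta j/\Delta i|\le 1$, consecutive targets satisfy $|r_{a+1}-r_a|\le 1$. To move from $(a,r_a)$ to $(a+1,r_{a+1})$ I cross the boundary $a/a{+}1$ at a row of the correct parity: if $r_a\ne r_{a+1}$ the two candidate rows have opposite parity, so I cross at whichever of them carries the rung; if $r_a=r_{a+1}$ and the rung sits in that row I cross directly, otherwise I detour by one free step to $r_a\pm1$ (both carry the rung there), cross, and step back, choosing the $+$ or $-$ direction so as to stay within $1\le b\le s$. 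The regime $|\Delta i|<|\Delta j|$ is handled the same way but is strictly easier: now I advance row by row, targeting $(\,\lfloor I(b)\rceil,b)$, the crossings between columns are sparse, and each is realized with at most one free detour as above.

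Finally I would verify the three required properties. Distance: every traversed vertex has its dominant coordinate equal to some $a$ (resp.\ $b$) in $[i,i']$ (resp.\ $[j,j']$), while its other coordinate lies within $1/2$ of the line value plus at most one detour/drift unit; comparing it to the point of the segment sharing its dominant coordinate — which lies on the segment itself, not merely on the supporting line — gives distance at most $3/2\le 2$. Simplicity: because $I$ and $J$ are affine, the rounded targets are monotone and the dominant coordinate never decreases along the path, so no vertex is repeated. Complexity: the path has $O(|\Delta i|+|\Delta j|)=O(r+s)$ vertices, each decided in constant time, well within $O(r\times s)$. The only genuine obstacle is the dominant regime, where a boundary must be crossed at \emph{every} step; it is precisely the parity fact above that guarantees a nearby rung is always reachable by a single detour, keeping the staircase within distance $2$ of the segment.
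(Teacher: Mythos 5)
Your proof is correct and follows essentially the same strategy as the paper's: round the segment to the nearest wall vertex at each step of the dominant coordinate, and use the brick-pattern parity to reach a rung within one free step whenever a row must be crossed. The only differences are cosmetic --- the paper handles all non-degenerate slopes uniformly by always advancing in the first coordinate (your case split on which coordinate dominates is unnecessary, since moves in the second coordinate within a row are always available), and it resolves a missing rung by shifting the target vertex sideways rather than by detouring and returning, which is why it settles for the bound $2$ where you obtain $3/2$.
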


  The lemma is illustrated on Figure~\ref{fig:wallapprox}.

\begin{proof}
  The claim is immediate if the segment is horizontal, i.e., $i = i'$, so we
  focus on the case of a non-horizontal segment. Up to swapping both points, we
  assume that $i' > i$.
  We consider points $p_0, \ldots, p_n$ with $n = i'-i$, i.e., $n > 0$, defined
  in the following way: $p_0$ is $(i,j)$, and for $k > 0$ the point $p_k$ is
  some point of the wall graph with first coordinate $i+k$ which is at minimal
  distance of the point $\pi_k$ of the segment having this first coordinate, i.e.,
  having ordinate value $i+k$ (pay attention to the fact that the point $(x,y)$
  is at row $x$, i.e., ordinate value~$x$, and column~$y$, i.e., abscissa~$y$).
  In particular, we must have $p_n = (i',j')$, and the distance of each $p_k$
  to~$\pi_k$ is only due to the second coordinate, and it is at most~$1/2$.

  We now modify slightly our choice of the points $p_1, \ldots, p_n$ to obtain
  points $p_1', \ldots, p_n'$ and ensure that the latter points
  have an incident vertical edge pointing up, i.e., if $p_k =
  (i'',j'')$, we want to make sure that the vertex is adjacent to $(i''-1,
  j'')$. This is the case of every other vertex on each row, so it can be enforced by
  replacing each point $p_i$ not satisfying the condition by a left or right
  neighbor $p_i'$
  which does, i.e., incrementing or decrementing the first component. (It may be
  the case that only one of these two options is possible if we are close
  to the borders of the wall.) Specifically, we choose the neighbor which is closest to the
  segment.
  In so doing, the distance of the intermediate points to the segment becomes at
  most~$1$, achieved in the case where $\pi_k = p_k$ and $p_k$ is not suitable
  (see the case of $p_3$ on Figure~\ref{fig:wallapprox}).

  It suffices now to design a path in the wall graph that successively visits
  the points $p_0 = p_0', \ldots, p_n', p_n$ (the last step being one single
  edge, or no edge if $p_n = p_n'$).
  We do so inductively: from a vertex~$p_k'$, we
  first change the second component to be that of~$p_{k+1}'$ while keeping the
  first component unchanged, then we increase the second component (traversing
  the edge whose existence is guaranteed by the choice of $p_{k+1}'$ relative to
  $p_{k+1}$). The only
  exceptional case is the last: if the node $p_n$ reached is not a segment
  endpoint but a left or right neighbor, we finish by traversing the requisite
  edge. It is clear that this definition ensures that the path is simple. The
  path is illustrated again in Figure~\ref{fig:wallapprox}.

  What is more, the distance of this path to the segment is at most 2.
  Indeed, consider any choice of $k > 0$, and let us study how far away the path
  from~$p_{k-1}'$ to $p_k'$ strays from the segment. The segment connects $\pi_{k-1}$ to~$\pi_k$,
  and we know that the first coordinate of~$p_{k-1}'$ and $\pi_{k-1}'$ is $i+k-1$
  and the first coordinate of~$p_{k}'$ and $\pi_{k}'$ is $i+k$. So the segment is
  at distance at most~$1$ to any point in the wall graph whose first coordinate
  is $i+k-1$ or $i+k$ and whose second coordinate is between that of~$\pi_{k-1}$
  and that of~$\pi_k$. But we know the path from~$p_{k-1}'$ to~$p_k'$ visits only
  points satisfying these conditions except that the second coordinate
  of~$p_{k-1}'$ and of~$p_k'$ may each differ by at most~$1$ for that
  of~$\pi_{k-1}$ and of~$\pi_k$. So in total the distance of the visited
  vertices to the segment is at most $1+1 = 2$. This bound of~$2$ holds for
  the path from~$p_0$ to~$p_n'$, and the last edge added in case $p_n' \neq p_n$ is 
  not a problem because then we are clearly at distance $\leq 1$ from the
  segment. This concludes the proof.
\end{proof}

\section{Scaling up the drawing}
\label{sec:fixing}

\begin{figure}
  \centering
  \begin{tikzpicture}[scale=1.5]
    \tikzgrid{6}{4}
    \node[draw,circle] (a) at (0,0) {};
    \node[draw,circle] (b) at (4,2) {};
    \node[draw,circle] (c) at (4,4) {};
    \node[draw,circle] (d) at (6,2) {};
    \draw[orange] ($(a)+(0,0.15)$) -- ($(b)+(0,0.15)$);
    \draw[orange] ($(a)-(0,0.15)$) -- ($(b)-(0,0.15)$);
    \draw[very thick] (a) -- (b);
    \draw[orange] ($(a)+(0,0.2)$) -- ($(c)+(0,0.2)$);
    \draw[orange] ($(a)-(0,0.2)$) -- ($(c)-(0,0.2)$);
    \draw[very thick] (a) -- (c);
    \draw[very thick] (b) -- (c);
    \draw[very thick] (b) -- (d);
    \draw[orange] ($(b)+(0,0.15)$) -- ($(d)+(0,0.15)$);
    \draw[orange] ($(b)-(0,0.15)$) -- ($(d)-(0,0.15)$);
    \draw[orange] ($(b)+(.12,0)$) -- ($(c)+(.12,0)$);
    \draw[orange] ($(b)-(.12,0)$) -- ($(c)-(.12,0)$);
    \draw [blue,name path=rect] (-.85,-.85) rectangle (.85,.85);
    \draw [blue,name path=rect] (3.15,1.15) rectangle (4.85,2.85);
    \draw [blue,name path=rect] (3.15,3.15) rectangle (4.85,4.85);
    \draw [blue,name path=rect] (5.15,1.15) rectangle (6.85,2.85);
    \draw[red,dashed,ultra thick](3.12,2.87) circle (0.13);
    \draw[red,dashed,ultra thick](.85,.61) circle (0.13);
  \end{tikzpicture}
  \caption{Illustration of the global construction. We consider a box $B_v''$ centered on every
  vertex $v$ (in blue), and a distance margin $M_{u,v}$ around every segment
  $[u,v]$ corresponding to an edge~$\{u,v\}$ (in orange). We choose the box sizes
  to guarantee several things: (i) the boxes do not overlap;
  (ii) the margins around two different segments with the
  same endpoint do not intersect outside of the box of this endpoint (using
  Lemma~\ref{lem:boxdist}, see dashed red circle near the bottom left of
  figure);
  (iii) a box cannot intersect with
  the distance margin around a segment unless its vertex is an endpoint of that
  segment (using Lemma~\ref{lem:pass}, see dashed red circle in the middle of
  figure). This guarantees that the path approximations that stay close to each edge
  (Lemma~\ref{lem:line}) cannot intersect except within each of the boxes.}
  \label{fig:situation}
\end{figure}
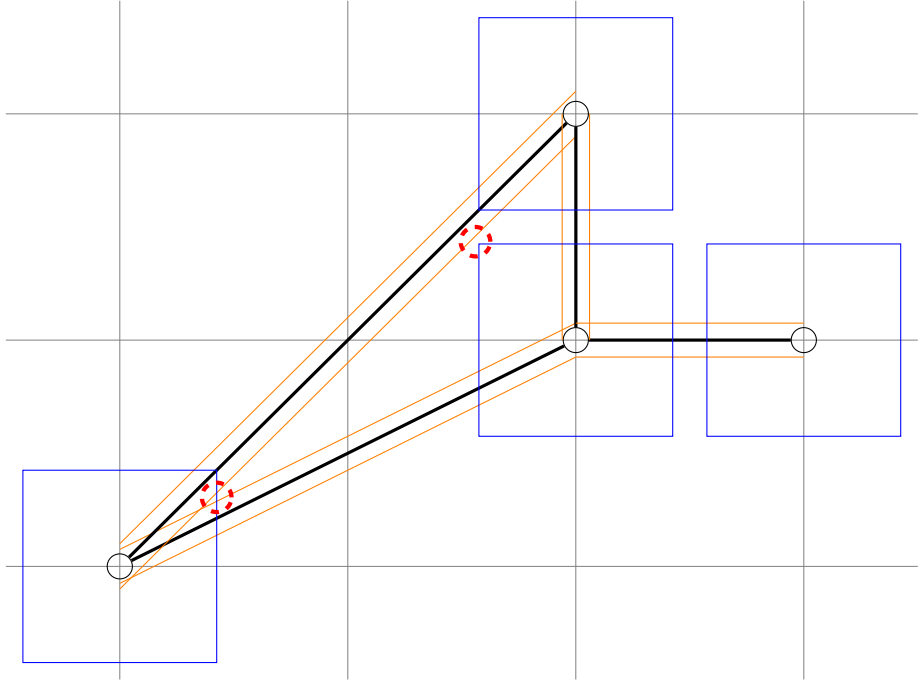

Let us recall what we showed in Section~\ref{sec:drawing}. 
Let us fix a box size
$\epsilon \colonequals \frac{1}{4n}$. We will consider boxes $B_v$ of radius~$\epsilon$
centered on each vertex (recall Definition~\ref{def:box}), and then scale up the
drawing.
For the scale-up, define the scale
factor $\sigma \colonequals 40 n^3$.
Take the drawing function~$\delta$ defined in Section~\ref{sec:drawing}, and multiply it
by~$\sigma$.
We can now consider the $(r,s)$-wall graph $W_{r,s}$ with $r = \sigma (n-2)$ and $s
= \sigma (n-2)$. The vertices of the drawing correspond to points of the wall graph
that are multiples of~$\sigma$.

For each vertex~$v$, we will consider the box $B_v'$ 
centered on~$v$ obtained from the box~$B_v$ after scaling up, whose radius is
$\sigma \epsilon = 10 n^2$.
We will also consider a larger box $B_v''$ centered on~$v$ with radius $\sigma \epsilon +
2$.
We will finally consider, for every segment $[u,v]$ corresponding to an edge
$\{u,v\}$, the \emph{margin} $M_{u,v}$ at distance 2 around $[u,v]$, consisting
of all points at Euclidean distance $\leq 2$ from the segment.
See Figure~\ref{fig:situation} for a summary of the situation.
We make four claims:

\begin{enumerate}[(i)]
  \item For any two distinct vertices $u \neq v$, the boxes $B_u''$ and $B_v''$ for~$u$ and~$v$ do not
    intersect. This is because
    $2 \sigma \epsilon + 4 < \sigma$.
  \item For any two segments $[u,v]$ and $[u,w]$ corresponding to edges $\{u,v\}$ and $\{u,w\}$
    sharing one vertex endpoint~$u$, the margins $M_{u,v}$ and $M_{u,w}$
    do not intersect except inside the larger box $B_u''$ 
    of~$u$. This is intuitively obtained by scaling up the bound from
    Lemma~\ref{lem:boxdist}. Specifically,
    consider any
    point $x$ in $M_{u,v} \cap M_{u,w}$. By definition, $x$ is at distance $\leq 2$ from both $[u,v]$
    and $[u,w]$, and the projections $v'$ and $w'$ of $x$ on $[u,v]$ and
    $[u,w]$ respectively are then at distance
    $\leq 4$ by the triangle inequality. Now, scaling up the bound from
    Lemma~\ref{lem:boxdist}, we know that any two points 
    respectively on segments $[u,v]$ and
    $[u,w]$ which are both outside the box $B_u'$ must be at distance at least
    $\frac{\sigma\epsilon}{2 n^2}$, i.e., at least~$5$. Thus, 
    at least one of the projections $v'$ and $w'$ must be inside the
    smaller box $B_u'$. Now, remember that 
    the point $x$ is at distance at most~$2$ from that projection, so
    we conclude that $x$ is inside the larger box $B_u''$, which concludes.
  \item For any segment $[u,v]$ of the drawing and point $w$ different from $u$ and $v$, the
    box~$B_w''$ does not contain any point of the margin $M_{u,v}$ of~$[u,v]$.
    This is obtained by scaling up the bound from Lemma~\ref{lem:pass}.
    Specifically, considering the point $w$ and the segment $[u,v]$, we know
    that the distance between $w$ and $[u,v]$ is at least
    $\frac{\sigma}{\sqrt{2}n}$, which is at least $20 n^2$. Now, the points of
    the box $B_w''$ are at distance $\leq \sqrt{2} \times (\epsilon \sigma+2)$ from~$w$,
    i.e., at distance $\leq \sqrt{2} \times (10 n^2+2)$. Further, the points
    of the margin are at distance $\leq 2$ from the segment. Now, we have
    $\sqrt{2} \times (10 n^2+2) + 2 < 20 n^2$ because $\sqrt{2} < 1.5$ and $n
    \geq 1$, 
    so we conclude that the box $B_w''$ and margin do not intersect.
  \item For two segments that do not share any endpoints, their margins do not
    intersect. Indeed, the segments do not intersect (this uses the fact that
    there are no edge crossings), and the
    distance between each endpoint pair is at least $\sigma$, hence, at
    least~$5$.
    (Note that, for any two segments that do not intersect, the minimum of the
    distance from one segment to another can always be achieved by taking one
    endpoint of one of the segments.)
\end{enumerate}

In summary, after scaling up, we obtain a situation similar to
Figure~\ref{fig:situation}. 
The vertices are at points whose coordinates are multiples of
the integer $\sigma$, and considering the box $B_w''$ of each vertex~$w$, and the
margin $M_{u,v}$ at distance 2 around each segment $[u,v]$, then (i) the boxes
are pairwise disjoint; (ii) the margins of two segments sharing an endpoint
intersect only within the box of that endpoint; (iii) the box on a vertex only
intersects with the margins of the segments where this vertex occurs; (iv) two
margins of segments that do not share an endpoint do not intersect.

Now, compute in polynomial time with Lemma~\ref{lem:line} a path for each edge,
that connects the endpoints of the edge (i.e., the vertices) while staying at
distance at most~$2$ from the segment that represents the edge, i.e., staying in
the margin of the segment corresponding to the edge.
The total time to compute these paths can be upper bounded by the total grid
size, which is $O(n^8)$, times the number of paths, which is $O(n^2)$, hence
an upper bound of $O(n^{10})$.

We know that the path for each segment remains in the margin at distance 2 of
that segment. We now claim that the paths of two segments can only share some intermediate
vertices if the segments share an endpoint $v$, and that then the shared intermediate
vertices are within the box~$B_v''$ of that endpoint. Indeed, if the segments do
not share any endpoints then their margins do not intersect by~(iv), and as the
paths are contained within the margins, they also share no vertices in that case. Now, if the
segments share an endpoint~$v$, the margins only intersect within~$B_v''$ by~(ii),
hence the same is true of the paths.

Thus, the only remaining part is to fix the embedding by
changing the paths within the boxes. We can do so
for each box independently, as the boxes are disjoint by~(i), and by~(iii) the box
$B_v''$ centered on a vertex~$v$ does not contain any
vertex for paths corresponding to edges that do not involve~$v$.

\section{Making the paths disjoint within the boxes}
For each vertex $v$ of the graph, let $G_v$ be the subgraph
induced by the 
vertices of the box $B_v''$ of radius $\sigma \epsilon + 2$ centered on~$v$.
Note that, because the coordinates of the box centers are multiples of the even integer
$\sigma$ and the box radius $\sigma\epsilon+2$ is an even integer, we know that 
$G_v$ is isomorphic to a $(2\sigma\epsilon+4,2\sigma\epsilon+4)$-wall graph.
For each edge $e = \{u,
v\}$, let $u_e$ and $v_e$ be respectively the last and first vertices of the path from~$u$ to~$v$ which
are in the boxes~$B_u''$ and~$B_v''$ respectively, and let~$P_e$ be the subpath
starting just after~$u_e$ and finishing just before~$v_e$.

We know that the paths $P_e$ and
$P_{e'}$ for two distinct edges $e$ and $e'$ do not intersect, because these
paths are outside the boxes of the vertex endpoints, they are within the margin
of the segment so cannot enter any other box by~(iii), and they cannot be within
the margin of any other segment by~(ii) and~(iv).

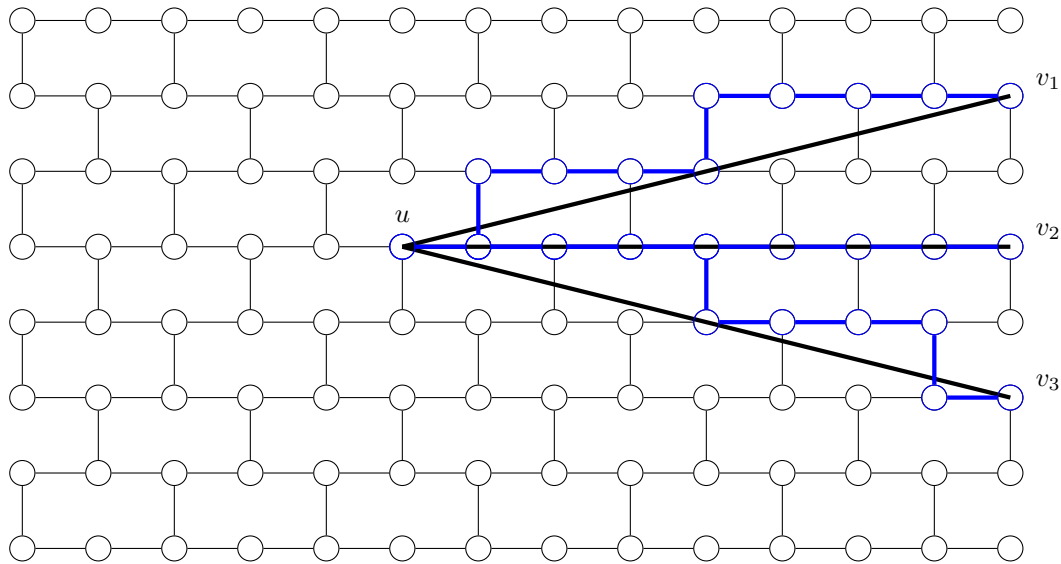
\begin{figure}
  \begin{tikzpicture}
    \wall{8}{14}{7}{13}{6}
    \draw[ultra thick] (6,5) -- (14, 3);
    \node (ql) at (14.5, 3.2) {$v_3$};
    \node (pl) at (14.5, 5.2) {$v_2$};
    \node (nl) at (14.5, 7.2) {$v_1$};
    \node (ul) at (6, 5.4) {$u$};
    \draw[ultra thick] (6,5) -- (14, 5);
    \draw[ultra thick] (6,5) -- (14, 7);
    \node[draw,circle,blue] (n0) at (6,5) {};
    \node[draw,circle,blue] (n1) at (7,5) {};
    \node[draw,circle,blue] (n2) at (8,5) {};
    \node[draw,circle,blue] (n3) at (9,5) {};
    \node[draw,circle,blue] (n4) at (10,5) {};
    \node[draw,circle,blue] (n5) at (11,5) {};
    \node[draw,circle,blue] (n6) at (12,5) {};
    \node[draw,circle,blue] (n7) at (13,5) {};
    \node[draw,circle,blue] (n8) at (14,5) {};
    \draw[ultra thick,blue] (n0) -- (n1) -- (n2) -- (n3) -- (n4) -- (n5) -- (n6) -- (n7) -- (n8);
    \node[draw,circle,blue] (p1) at (14,7) {};
    \node[draw,circle,blue] (p2) at (13,7) {};
    \node[draw,circle,blue] (p3) at (12,7) {};
    \node[draw,circle,blue] (p4) at (11,7) {};
    \node[draw,circle,blue] (p5) at (10,7) {};
    \node[draw,circle,blue] (p6) at (10,6) {};
    \node[draw,circle,blue] (p7) at (9,6) {};
    \node[draw,circle,blue] (p8) at (8,6) {};
    \node[draw,circle,blue] (p9) at (7,6) {};
    \node[draw,circle,blue] (p10) at (7,5) {};
    \draw[ultra thick,blue] (p1) -- (p2) -- (p3) -- (p4) -- (p5) -- (p6) -- (p7)
    -- (p8) -- (p9) -- (p10);
    \node[draw,circle,blue] (q0) at (6,5) {};
    \node[draw,circle,blue] (q1) at (7,5) {};
    \node[draw,circle,blue] (q2) at (8,5) {};
    \node[draw,circle,blue] (q3) at (9,5) {};
    \node[draw,circle,blue] (q4) at (10,5) {};
    \node[draw,circle,blue] (q5) at (10,4) {};
    \node[draw,circle,blue] (q6) at (11,4) {};
    \node[draw,circle,blue] (q7) at (12,4) {};
    \node[draw,circle,blue] (q8) at (13,4) {};
    \node[draw,circle,blue] (q9) at (13,3) {};
    \node[draw,circle,blue] (q10) at (14,3) {};
    \draw[ultra thick,blue] (q1) -- (q2) -- (q3) -- (q4) -- (q5) -- (q6) -- (q7)
    -- (q8) -- (q9) -- (q10);
  \end{tikzpicture}
  \caption{Choosing a vertex of the wall graph to embed the vertex $u$ of the
  graph with three neighbors $w_1, w_2, w_3$ such that the last vertices in the
  box in the paths approximating the edges $\{u, w_1\}$ and $\{u, w_2\}$ and
  $\{u, w_3\}$ are $v_1$ and $v_2$ and $v_3$. We show in black the beginning of
  the segments $[u,w_1]$ and $[u,w_2]$ and $[u,w_3]$, and show in blue the paths given by
  Lemma~\ref{lem:line}, which overlap. To fix the issue we simply connect $v_1$ and $v_2$ and
  $v_3$ as explained in the text and choose our target vertex for~$u$
  accordingly.}
  \label{fig:replacing}
\end{figure}

So let us now fix the rest of the paths, which are within the boxes.
Consider the paths from a vertex~$u$ to the vertices
of the form $u_e$ that are just before the paths~$P_e$.
There are at most~$3$ such vertices, because the graph is
degree-$3$. See Figure~\ref{fig:replacing} to understand why the paths from~$u$
to these three vertices may intersect. So we change completely these paths, in the process
possibly also changing the image of the vertex~$v$ in the embedding to be
another vertex of the box.

Let us state what we need:

\begin{lemma}
  \label{lem:paths}
  For any $(r,s)$-wall graph $W_{r,s}$ with $r \geq 1$ and $s \geq 2$, given three distinct
  vertices $p$, $q$, $r$, we can compute in time $O(r \times s)$ a vertex $x$
  of~$W_{r,s}$ and three vertex-disjoint paths between $x$ and $p$, and $x$
  and~$q$, and $x$ and~$r$.
\end{lemma}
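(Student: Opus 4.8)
The plan is to reduce the statement to the connectivity of wall graphs together with two breadth-first searches. First I would observe that asking for a vertex $x$ and three paths towards $p$, $q$, $r$ that are pairwise disjoint apart from their common endpoint $x$ is the same as asking for a tree $T \subseteq W_{r,s}$ joining the three terminals whose only possible branch vertex is $x$: the three required paths are then just the three branches of $T$ from $x$ towards $p$, $q$, and $r$ (one or more of which may be \emph{trivial}, i.e.\ of length~$0$, when $x$ coincides with a terminal). Since a tree with at most three leaves has at most one vertex of degree~$3$, any minimal connected subgraph joining the three terminals has exactly this shape, so it suffices to exhibit and compute one such subgraph in linear time.

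Second, I would check that $W_{r,s}$ is connected whenever $r \geq 1$ and $s \geq 2$, which is what makes the search succeed. Each row $\{(i,j) \mid 1 \leq j \leq s\}$ induces a horizontal path through its $s$ vertices, so each row is connected; and for every row $i$ other than the last one there is a column $j \in \{1,2\}$ (both of which exist because $s \geq 2$) with $i+j$ even, yielding a vertical edge from $(i,j)$ to $(i+1,j)$ between row~$i$ and row~$i+1$. Hence consecutive rows are linked and the whole wall is connected.

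Third, here is the construction. Run a breadth-first search from $p$ to obtain a simple shortest path $A$ from $p$ to $q$, which exists by connectivity. Then run a second breadth-first search from the terminal $r$, halting as soon as it reaches any vertex of $A$; call $x$ that first-reached vertex and let $B$ be the resulting simple path from $r$ to $x$. Splitting $A$ at $x$ gives subpaths $A_p$ from $x$ to $p$ and $A_q$ from $x$ to $q$, which meet only at $x$; and by the halting rule $B$ meets $A$ only at its endpoint $x$, so $A_p$, $A_q$, and $B$ are pairwise disjoint except at $x$. Taking these three as the required paths finishes the construction, and since each search and the subsequent splitting touch $O(r \times s)$ vertices and edges, the total running time is $O(r \times s)$.

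The only point that needs care, and the place I expect the ``main obstacle'' to be purely a matter of bookkeeping, is the handling of the degenerate cases, all of which collapse cleanly. If the terminal $r$ already lies on $A$, then $B$ is the trivial path and $x = r$; and if the second search first meets $A$ at $p$ (respectively $q$), then $A_p$ (respectively $A_q$) is the trivial path and $x = p$ (respectively $x = q$). In every case the three paths share only the vertex $x$ and each genuinely ends at the prescribed terminal, as required by the statement.
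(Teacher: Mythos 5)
Your proof is correct and takes essentially the same approach as the paper's: find a path from $p$ to $q$, then grow a path from $r$ until it first meets that path, and take the meeting point as the branch vertex $x$. The only differences are presentational --- you make explicit the connectivity of the wall and the degenerate cases, which the paper's proof leaves implicit.
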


\begin{proof}
  As $s \geq 2$, the graph is connected. Find in linear time an arbitrary simple
  path from~$p$ to~$q$. If $r$ is in the path, take $x=r$, which concludes.
  Otherwise, find in linear time an arbitrary simple path from~$r$ to~$p$, and
  stop as soon as it encounters a vertex of the previous path (which may be~$p$,
  $q$, or an intermediate vertex of that path). Take the
  encountered vertex to be~$x$, and the three vertex-disjoint paths connecting
  $x$ and $p$, $x$ and $q$, and $x$ and $r$, are then easy to find.
\end{proof}

Thus, for each vertex~$v$, we consider the box~$B_v''$ and the graph~$G_v$. If $v$ has strictly less
than 2 neighbors, there is nothing to do. If $v$ has two neighbors, letting
$v_1$ and $v_2$ be the last vertices before the paths $P_{e_1}$ and $P_{e_2}$
corresponding to the edge, we set the image of~$v$ to be~$v_1$ and pick an
arbitrary simple path from~$v_1$ to~$v_2$ in~$G_v$. If $v$ has three neighbors,
we use Lemma~\ref{lem:paths}
to find the image of~$v$ and the paths.
This process is in linear time in each
box, so in time
$O(n^8)$ overall where~$n$ is the number of graph vertices.

We have explained how to fix the embedding inside the boxes to ensure that there
are no intersections inside the boxes, and there are no intersections outside
the boxes, so this concludes the computation of the embedding and concludes the
proof of Theorem~\ref{thm:main}.

\bibliography{main.bib}
\end{document}